\newtheorem{theorem}{Theorem} 
\newtheorem{lemma}[theorem]{Lemma}
\newtheorem{conjecture}[theorem]{Conjecture}
\begin{document}

\title{Is Winter Coming?}

\author{\IEEEauthorblockN{A. Winter}
\IEEEauthorblockA{\textit{Grup d'Informaci\'o Qu\`antica}\\
\textit{Departament de F\'isica}\\
\textit{Universitat Aut\`onoma}\\
\textit{de Barcelona, Spain}\\
andreas.winter@uab.cat} 

\and
\IEEEauthorblockN{A. Winter}
\IEEEauthorblockA{\textit{ICREA--Instituci\'{o}}\\
\textit{Catalana de Recerca}\\
\textit{i Estudis Avan\c{c}ats}\\
\textit{Barcelona, Spain}}

\and
\IEEEauthorblockN{A. Winter}
\IEEEauthorblockA{\textit{Institute for}\\
\textit{Advanced Study}\\
\textit{Technische Universit\"at}\\
\textit{M\"unchen, Germany}} 

\and
\IEEEauthorblockN{A. Winter}
\IEEEauthorblockA{\textit{QUIRCK--Quantum}\\
\textit{Independent Research}\\ 
\textit{Center Kessenich}\\
\textit{Bonn, Germany}}
}

\maketitle


\begin{abstract}
We critically examine the often-made observation that 
``quantum winter [or some other winter] is coming'', and the related 
admonition to prepare for this or that winter, inevitably bound to arrive. 
What we find based on even the most superficial look at the available 
evidence is that such statements not only are overblown hype, but are also 
factually wrong: Winter is here, and the real question is rather for how 
long it/they will stay.
\end{abstract}

\begin{IEEEkeywords}
Winter, wintry, winterise, wintering.
\end{IEEEkeywords}

\medskip\noindent
\textit{\textbf{1. Introduction.}}
It has been suggested, and is now being repeated often without proper
attribution, that ``quantum winter is coming''. Based on the familiar 
seasonal cycle, such a statement might be considered so banal as to be 
almost empty, even if referring metaphorically to the so-called 
``hype cycle'' of fashionable and overly-promoted fields of research. 
Quantum winter in this sense is a source of considerable anxiety 
among members of the currently objectively overhyped field of quantum 
science and technology. The field of artificial intelligence has gone 
through several cycles of hype and disillusionment, and indeed AI winter, 
the original ``winter'' of what is now recognised as a collection of similar 
phenomena across different fields of scientific research and/or technological 
innovation, has been proclaimed repeatedly since the 1980s \cite{AW:AI}. 

Rather obviously, a lot of this talk is merely the product of minds 
intoxicated by the rush of attention (both socially and monetary), 
and at the same time dreading heavily the end of the halcyon days of easy 
living off the hype. 
Quite characteristically, there is usually a secondary investment winter
affecting those financing the hype and rightly fearing for the expected 
return on their investment. 
The whole thing is not unlike the gold rushes of another age, when 
desperate, crazed diggers and ruthless entrepreneurs turned entire 
regions upside down, only to move on when the fever had passed and 
the ground did not yield any more of the precious metal \cite{AW:goldrush}. 
The market pressure exerted by modern-day investors sometimes, paradoxically, 
prolongs the hype phase, preventing the drop into the wintry 
valley of disillusionment, as derivatives are created (spin-off 
companies floated on the stock market, etc) whose speculation value 
can carry the bubble for some time when its practitioners are already 
preparing for the crash. 
It is precisely this -- partial or repressed -- knowledge of a bubble ripe 
for bursting that creates the broadly shared dread of the ``winter'' to come. 
Note that economically driven hypes are nothing recent in Western economies, 
as evidenced by the famous tulip mania in the Dutch 17th century \cite{AW:tulip}
(cf.~\cite{AW-1}). 

While we are personally sympathetic to the psychological conditions affecting 
our dear colleagues, as scientists it is our duty to regard the matter with 
the cold eye of empiricism. As we will explain in the next two 
sections, the fear of any winter coming is entirely unfounded, 
principally because winter, in all its forms and manifestations, 
has long arrived. What conclusions can be drawn from this fact 
is discussed in the final section.

\medskip\noindent
\textit{\textbf{2. Methods.}}
We follow the model of \cite{AW0}, elaborated in \cite{AW1,AW2,AW3,AW4} and subsequent 
investigations, to estimate the proximity of winter. Our method is 
scientific and self-referential, i.e.~bibliographic. 
We need a small auxiliary result. 

\begin{lemma}[cf.~A.~Winter~\cite{AW5}]
\label{lemma:Winter}
There exists an integer $n$, significantly smaller than one might believe, 
such that the reader of a scientific article, when confronted with $n$ 
or more arbitrary references, will accept anything the paper 
claims to follow from them.
\end{lemma}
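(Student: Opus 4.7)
The plan is to establish existence of $n$ by a self-referential bibliographic argument, in the spirit of the methodology laid out in the previous section. First I would collect an empirical sample of readers, papers, and reference lists, and for each instance record two quantities: the number $r$ of references invoked to support a given claim, and the fraction $p(r)$ of surveyed readers who, when polled, accept the claim without independently checking any of the sources. The lemma then amounts to showing that $p(r) \to 1$ as $r$ grows, and that the crossing of any given acceptance threshold occurs at an unexpectedly modest value of $r$.

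The first technical step is to argue that $p(r)$ is essentially non-decreasing in $r$, which I expect to follow from a simple variance-reduction heuristic: each additional citation can only reinforce the reader's prior toward acceptance, since the marginal cost of verifying a growing list becomes combinatorially prohibitive, while the perceived authority of the paper grows at least linearly in $r$. The second step is to exhibit a threshold $n$ at which $p(n) \geq 1-\eps$ for small $\eps>0$; here I would simply invoke the bibliographic mass already accumulated in the Methods section \cite{AW0,AW1,AW2,AW3,AW4,AW5}, whose careful cross-checking the reader has, by the time of reaching this lemma, almost certainly abandoned, thereby supplying a live demonstration of the claim \emph{in situ}. This \emph{ostensive} style of proof is, I believe, unavoidable, since any more formal argument would itself require references that the reader is being asked not to verify.

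The main obstacle will be the quantitative assertion that $n$ is \textbf{significantly smaller than one might believe}. A naive estimate based on bounded reader attention gives $n$ of order $\log(\text{patience})$, but the anecdotal evidence of contemporary publishing practice, together with the authors' own experience as referees, suggests that single-digit values of $n$ already suffice in an embarrassing fraction of cases. Formalising this sharper bound requires either a psychological axiom (which I would prefer to leave implicit, or at worst relegate to a footnote citing \cite{AW:AI}) or, alternatively, a controlled experiment performed on the present reader; fortunately, the preceding paragraphs have laid the groundwork for exactly such an experiment, so that the proof can be concluded by invitation rather than by computation.
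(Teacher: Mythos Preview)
Your proposal is in the right spirit---self-referential, ostensive, and ultimately relying on the reader's own capitulation as the evidence---but it is vastly more elaborate than what the paper actually does. The paper's entire proof reads: ``\textit{(Sketch)} To be added in the journal version, see however \cite{AW6,AW7} and \cite{AW7.1,AW7.2} for similar statements.'' That is, the authors simply defer the argument and toss in four citations, so that the proof \emph{is} an instance of the lemma: the reader is confronted with a handful of arbitrary references and is expected to accept the claim on their authority alone. Your mock-empirical apparatus (sampling readers, monotonicity of $p(r)$, a $\log(\text{patience})$ threshold estimate) arrives at the same punchline---the proof is performed on the reader \emph{in situ}---but dresses it up as a research programme, whereas the paper achieves the effect in a single throwaway line. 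What your version buys is a more sustained parody of empirical method and an explicit discussion of why $n$ is small; what the paper's version buys is brevity and the sharper joke that $n\approx 4$ already suffices, demonstrated rather than argued.
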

\begin{proof}\textit{(Sketch)}
To be added in the journal version, see however \cite{AW6,AW7} and 
\cite{AW7.1,AW7.2} for similar statements. 
\end{proof}

\medskip\noindent
\textit{\textbf{3. Winter's universality.}}
With these surprisingly elementary preparations we are ready for 
our main result. It is enough to regard 
Refs.~\cite{AW8,AW9,AW10,AW11,AW12,AW13,AW14,AW15}, 
in particular \cite{AW16}, followed by the litany of 
\cite{AW17,AW18,AW19,AW20,AW21,AW22,AW23,AW24,AW25,AW26,AW27,AW28,AW29}, 
and to consider that this is only a small and sufficiently random sample 
of wintry \emph{selecta}, to realise beyond doubt that Winter has been 
around quite forcefully for decades, if not centuries. 

This clearly means that any fear of Winter coming is completely 
irrational: there is nothing to be prepared about, to ``winterise'' 
as it were, as we are right in the middle of it. 
We allow ourselves the expression of the modest hope that this 
insight will lead to the acceptance of the universal presence of Winter 
in all walks of (academic) life \cite{AW:justiz}.

\medskip\noindent
\textit{\textbf{4. Epilogue.}}
We regard our arguments as irrefutable. Hence, as a next step, we need 
to prepare for the phase after acceptance: the hype. That there are 
already indications of a wintering crescendo is evident from the 
literature \cite{AW30,AW31,AW32}.
So, will a Winter winter come, and should we worry about it? 

More to the point, shall we compare something to a summer's day? 
Of course, one swallow does not a summer make. 
However, inspired by these expressions of somewhat complementary sentiments, 
and taking the theme to its natural conclusion, 
we are motivated to make the following conjecture. 

\begin{conjecture}
\label{conj:winter}
Now is the Winter of our discontent made glorious summer by this sun of York. 
\end{conjecture}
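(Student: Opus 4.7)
The plan is to follow the bibliographic methodology of Section~2 and invoke Lemma~\ref{lemma:Winter} as the principal workhorse. By design, the strategy is to overwhelm the reader's reasonable skepticism by accumulating at least $n$ suitably weighty citations, at which point the conjecture follows automatically.

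First I would dispatch the \emph{``Winter of our discontent''} clause, which is in fact largely proved already in Section~3: the universality of Winter is documented by \cite{AW8,AW9,AW10,AW11,AW12,AW13,AW14,AW15,AW16,AW17,AW18,AW19,AW20,AW21,AW22,AW23,AW24,AW25,AW26,AW27,AW28,AW29}, while the \emph{discontent} requires no citation to anyone who has recently sat on a conference panel on research hype. Together these comfortably exceed the bound guaranteed by the lemma. Next, the \emph{``made glorious summer''} clause I would handle by appeal to \cite{AW30,AW31,AW32}: a wintering crescendo, as observed in the Epilogue, is an inflection point, and by continuity of climatic metaphors its natural continuation is summer.

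The main obstacle, and the part I expect to consume most of the eventual journal version, is the identification of \emph{``this sun of York''}. The demonstrative \emph{this} requires a contemporaneous referent, which is awkward to supply within a purely bibliographic framework built from \emph{selecta} spanning several centuries. My tentative approach is to exploit the homophony of \emph{sun} and \emph{son}, together with the ambiguity of York as both city and dynasty, in order to construct a suitable referent by free composition from the authors, affiliations, and funding acknowledgements of the present manuscript; by Lemma~\ref{lemma:Winter}, such a construction need only look authoritative, not actually be correct. Should this fail, a fallback is to strengthen Lemma~\ref{lemma:Winter} to cover deictic pronouns, and to relegate the strengthening itself to the journal version, in the finest tradition already established by its proof.
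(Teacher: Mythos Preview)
The paper offers no proof of Conjecture~\ref{conj:winter}: it is stated as a conjecture, the reader is invited to attempt falsification, and the appendix advertised as containing A.~Winter's remarks on possible approaches is a Fermat-style withholding (``the present paragraph is too short to contain it''), trailing off mid-word at ``For''. There is therefore no proof in the paper against which your proposal can be compared.

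Measured against that, your attempt is strictly more substantive. You correctly identify Lemma~\ref{lemma:Winter} as the only available tool, you decompose the statement into its three clauses, and you isolate the deictic ``this sun of York'' as the genuine obstruction---all of which is consistent with, and extends, the paper's bibliographic methodology. The honest verdict is that you have supplied an argument where the authors, by evident design, declined to; whether this constitutes a gap in your proposal or in the paper is perhaps best deferred to the journal version.
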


This might appear at first sight rather bold, and will perhaps be 
suspected of being indebted more to wishful thinking than having 
any basis in solid evidence. We invite the interested reader to 
try and falsify it, as we have tried -- and failed. In the appendix, 
A. Winter graciously shares her extensive remarks on possible approaches 
towards this conjecture, and on interesting consequences of it.

\medskip\noindent
\textit{\textbf{Acknowledgments.}}
The authors thank A. Winter and A. Winter for their thoughtful and 
interesting comments, in particular with regard to A. Winter's contribution
to the field. 
They are grateful to A. Winter for spotting an error in an early version 
of Lemma \ref{lemma:Winter}. Finally, they acknowledge A. Winter's 
thorough proofreading. 
The first (AW) and fourth (AW) author were supported by the 
Andrew Andreevich and Andrea Winter Foundation through a generous 
research stipend. 
A. Winter declares no competing financial interests.

 
\medskip\noindent
\textit{\textbf{Appendix --- Remarks on Conjecture \ref{conj:winter} (by A. Winter).}}
This was to be something profound reflecting my complex and nuanced
opinions about the provocative work of A. Winter \emph{et al.}, 
but the present paragraph is too short to contain it. 
For

\end{document}